\newtheorem{theorem}{Theorem}[section]
\newtheorem{lemma}[theorem]{Lemma}
\newtheorem{proposition}[theorem]{Proposition}
\newcommand{\pa}{\partial}
\begin{document}

\title[ ]{The Recursion operators of the BKP hierarchy and the CKP Hierarchy}
\author{Maohua Li$^{1,2}$ Jipeng Cheng$^3$  Chuanzhong Li$^2$  Jingsong
He$^2$$^*$} \dedicatory {
1. Department of Mathematics, USTC, Hefei, 230026 Anhui, P.R.China\\
2. Department of Mathematics, NBU, Ningbo, 315211 Zhejiang, P.R.China\\
3. Department of Mathematics,  CUMT, Xuzhou, 221116 Jiangsu,
 P.R.China }
\date{}

\thanks{$^*$ Corresponding author: hejingsong@nbu.edu.cn}

\begin{abstract}

In this paper, under the  constraints of the BKP(CKP) hierarchy, a
crucial observation is that the odd dynamical variable $u_{2k+1}$
can be explicitly expressed by the even dynamical variable $u_{2k}$
in the Lax operator $L$ through a new operator $B$. Using operator
$B$, the essential differences between the BKP hierarchy and the CKP
hierarchy are given by the flow equations and the recursion
operators under the $(2n+1)$-reduction. The formal formulas of the
recursion operators for the BKP and CKP hierarchy under
$(2n+1)$-reduction are given. To illustrate this method, the two
recursion operators are constructed explicitly for the 3-reduction
of the BKP and CKP hierarchies. The $t_7$ flows of $u_2$ are
generated from $t_1$ flows by the above recursion operators, which
are consistent with the corresponding flows generated by the flow
equations under $3$-reduction.
\end{abstract}
\maketitle
Mathematics Subject Classifications(2000).  37K05, 37K10, 37K40.\\
\textbf{Keywords}: BKP hierarchy, CKP hierarchy, Flow equation,
Recursion operator.

\section{Introduction}

The Kadomtsev-Petviashvili (KP) hierarchy \cite{dick1} is  an
attractive research object in the mathematical physics since 1980s.
It has two important sub-hierarchies, i. e.  Kadomtsev-Petviashvili
hierarchy of B-type (BKP hierarchy) and Kadomtsev-Petviashvili
hierarchy of C-type (CKP hierarchy) \cite{olver2} which are two
interesting reductions of KP hierarchy associated with two infinite
dimensional algebras $o({\infty})$ and $sp({\infty})$ respectively
\cite{jimbo93}. By the formulation of pseudo-differential operator,
the Lax operator of KP hierarchy is
$L=\sum_{l\geq0}u_l\partial^{1-l}=\partial+u_2\partial^{-1}+u_3\partial^{-2}+\cdots$.
These different algebraic structures also can be shown in some sense
by the reduction conditions on the Lax operator, i. e.  $L^*=-L$ for
the CKP hierarchy and  $L^*=-\partial L\partial^{-1}$ for the BKP
hierarchy \cite{jimbo93}. Here $L^*$ is the formally adjoint
operator of $L$. Besides above two essential differences between the
KP, BKP and CKP hierarchies from the view of algebra structure and
Lax operator, there are also more interesting facts in the four
aspects including the $\tau$ function \cite{jimbo93}, the gauge
transformation \cite{csy1}-\cite{hwc1}, the additional symmetries
and the ASvM formula \cite{OS86}-\cite{tm2011}, the freeze of the
even flows and  odd number dynamical variables $u_j,
(j=3,5,7,\cdots)$ \cite{jimbo93}. Because of the importance of the
flow equations and recursion operators, it is very natural to
explore more differences from these two aspects among them.

The recursion operator  \cite{olver77} for a given soliton equation
is firstly introduced by using the KdV equation as example. The
results of the recent thirty five years of the  soliton theory show
possessing a recursion operator is one of essential integrable
properties, which is related to infinitely many conservation laws
and symmetries, Hamiltonian structure, higher order flows, etc
\cite{olver2,ovel1,sokolov}. With the help of the recursion
operator, the higher flows can be generated from the lower flows for
an integrable hierarchy, which offers a natural way to construct the
whole  hierarchy from a single seed system
\cite{olver2,sokolov,loris99,lcz2011,boiti1,fokas1,santini1,cjp1}.
There are several ways  to construct the recursion operator of a
given integrable system, which is reviewed  in reference
\cite{sokolov}. The construction of recursion operator in 2+1
dimension was given in the papers by Fokas and Santini
\cite{fokas1,santini1,fokas2}. For the $n$-reduction KP hierarchy,
it is very natural to extract recursion operator from the explicit
flow equations, which has been done by W. Strampp and W. Oevel
\cite{ovel1}. The advantage of this method is that the higher-order
flows generated by recursion operator from lower-order ones are
local even if the recursion operator has nonlocal term, because the
higher-order flows are automatically identified with the local flows
given by the Lax equations of the KP hierarchy.

To improve the understanding of more essential differences of the
KP, BKP and CKP hierarchies, we shall study  the explicit flows and
the recursion operators for them.  The main difficulties to apply
Strampp and Oevel's method \cite{ovel1} for the BKP hierarchy and
the CKP hierarchy are due to the two constraints: the disappearance
of the even flows and the odd dynamical variables
$u_i,(i=1,3,5,\cdots)$, which is originated from the reduction
conditions on the Lax operator $L^*=-\partial
L\partial^{-1}$ or $L^*=-L$. It is also crucial to know that there only exists
$(2n+1)$-th reduction in the BKP hierarchy and the CKP hierarchy.
The key step is to transmit the reduction conditions on the Lax
operator to the flow equations, which can be realized by expressing
the odd dynamical variable $u_j,(j=1,3,5,\cdots)$ by the even ones.

This paper was organized as follows. The odd dynamical variable
$u_j,(j=1,3,5,\cdots,)$ is expressed by the even ones in section 2
with the reduction condition on Lax operator $L^*=-L$ or
$L^*=-\partial L\partial^{-1}$. An operator $B$  is introduced for
this purpose. In section 3, the odd flow equations of the even
dynamical variable $u_{2j},(j=1,2,\cdots,)$ are obtain from
$L_{t_{2m+1}}=[L,(L^{2m+1})_-].$ We also calculate the some odd
flows of BKP(CKP) hierarchy as examples. In section 4 the recursion
operator of BKP(CKP) hierarchy is discussed. This recursion
operators are different from the recursion operator of KP hierarchy
\cite{ovel1,sokolov,cjp1}. Section 5 is devoted on reduction of  the
$t_7$ flows of BKP(CKP) hierarchy by the recursion operator, and
which are consistent with corresponding flows in section 2.  Section
6 is a brief discussion  on recursion operator and the future
problem.

\section{The Even dynamical variables }
In this section, we study the even dynamical variables and the odd
ones of the Lax operator. The even dynamical variables will be
expressed by the odd dynamical variables by a formula which be
introduced below.

Firstly we given a pseudo-differential Lax operator
\begin{eqnarray}
L&=&\sum_{l\geq0}u_l\partial^{1-l}\nonumber\\
&=&\partial+u_2\partial^{-1}+u_3\partial^{-2}+\cdots ,\label{lax}
\end{eqnarray}
where we assume $u_0=1,u_1=0$, and $u_2,u_3,\cdots$ are the
functions of an infinite set of time variables
$t=(t_1=x,t_3,\cdots)$ and $\partial=\frac{\partial}{\partial x}$,
by imposing the following condition on $L$
\begin{eqnarray}
L^*=-\partial^k L \partial^{-k}\label{bckpcondition}.
\end{eqnarray}
And the formally adjoined operators are given by
\begin{eqnarray}
L^*=\sum_{l\geq0}{(-1)}^{1-l}\partial^{1-l}u_l,
\end{eqnarray}
with $k=0,1$ corresponding to the Lax operator for CKP hierarchy and
BKP hierarchy respectively. The operation of $\partial^k$ with $k\in
\mathbb{Z}$ is defined by
\begin{eqnarray}
\partial^ku=\sum_{j\geq0}C_k^ju^{(j)}\partial^{k-j},
\end{eqnarray} where $u^{(j)}=\frac{\partial^j u}{\partial x^j}$, with
\begin{eqnarray}
C_k^j=\frac{k(k-1)(k-2)\cdots(k-j+1)}{j!},j\in \mathbb{Z_+}.
\end{eqnarray}
The $m$-th power of $L$ can be denote for
\begin{eqnarray}
L^m&=&{(\partial+u_2\partial^{-1}+u_3\partial^{-2}+\cdots)}^m\nonumber\\
&=&\sum_{j\leq m} p_j(m)\partial^j
\end{eqnarray}
and  $(L^m)_+=\sum^m_{j=0}p_j(m)\partial^j$, i. e.  $(L^m)_+$ is the
non-negative projection of $L^m$, and $(L^m)_-=L^m-(L^m)_+$ is the
negative projection of $L^m$. In particular,
\begin{equation}\label{uprelation}
    u_l=p_{1-l}(1), l=2,3,\cdots.
\end{equation}
The dynamical equations of BKP hierarchy and CKP hierarchy are defined as
follows,
\begin{equation}\label{bckphierarchy}
    L_{t_{2m+1}}=[(L^{2m+1})_{+},L]=[L,(L^{2m+1})_-],\quad m=0,1,2,\cdots
\end{equation}
Remembering the corresponding constraints on the Lax operator, there
are only the odd flows existed with the BKP hierarchy and CKP
hierarchy \cite{jimbo93}. And  the $(2n+1)$-th power of Lax operator
$L$ must be considered,
\begin{equation}\label{nlax}
L^{2n+1}=\sum_{j\leq 2n+1}
p_j(2n+1)\partial^j,
\end{equation}
with $n=0,1,2,\cdots$. From the constraints (\ref{bckpcondition}),
we know
\begin{equation}\label{nckpcondtion}
\partial^{-k} L^{*2n+1}=-L^{2n+1}\partial^{-k}.
\end{equation}
By considering the negative part of both sides
\begin{equation}\label{negativenbckpcondtion}
(\partial^{-k}L^{*2n+1})_{-}=-(L^{2n+1}\partial^{-k})_{-},
\end{equation}
then
\begin{eqnarray}
l.h.s~\rm{of}~(\ref{negativenbckpcondtion})&=&\sum_{j \leq
-1+k}(-1)^{j}\partial
^{j-k} p_j(2n+1)\nonumber\\
&=&\sum_{j \leq -1+k}\sum_{l\geq 0}(-1)^{j}C_{j-k}^l
p_j^{(l)}(2n+1)\partial
^{j-k-l}\nonumber\\
&=&\sum_{j \leq-1+k}\sum_{l\geq 0}^ {-j-1+k}(-1)^{j+l}C_{j+l-k}^l
p_{j+l}^{(l)}(2n+1)\partial^{j-k}.\nonumber
\end{eqnarray}
On the other hand,
\begin{eqnarray}
r.h.s~\rm{of}~(\ref{negativenbckpcondtion})&=&-\sum_{j \leq
-1+k}p_j(2n+1)\partial^{j-k} .\nonumber
\end{eqnarray}
So
\begin{equation}\label{npowerckp}
\sum_{j\leq-1+k}\sum_{l\geq 0}^ {-j-1+k}(-1)^{j+l}C_{j+l-k}^l
p_{j+l}^{(l)}(2n+1)\partial^{j-k}=-\sum_{j\leq
-1+k}p_j(2n+1)\partial^{j-k}.
\end{equation}
Comparing the coefficients of $\partial^{j-k}$ in above relation, we
find
\begin{equation}\label{coeffnbckp}
\sum_{l\geq 0}^ {-j-1+k}(-1)^{j+l}C_{j+l-k}^l
p_{j+l}^{(l)}(2n+1)=-p_j(2n+1),\quad j\leq -1+k.
\end{equation}
Further,
\begin{equation}\label{recoeffnckp}
((-1)^j+1)p_j(2n+1)=-\sum_{l\geq 1}^ {-j-1+k}(-1)^{j+l}C_{j+l-k}^l
p_{j+l}^{(l)}(2n+1),\quad j\leq -1+k.
\end{equation}
Thus, one can find $p_j(2n+1)$ with $j$ odd are independent. As for
$j$ being even number, we have
\begin{equation}\label{evenrecoeffnbckp}
p_j(2n+1)=-\frac{1}{2}\sum_{l\geq 1}^ {-j-1+k}(-1)^{j+l}C_{j+l-k}^l
p_{j+l}^{(l)}(2n+1),\quad j\leq -1+k~\rm{and}~ \emph{j}~\rm{even}.
\end{equation}
In particular, $p_0(2n+1)=0$ for $k=1$. Thus $p_j(2n+1)$ in
(\ref{evenrecoeffnbckp}) becomes
\begin{equation}\label{newevenbcncoeffient}
   p_j(2n+1)=-\frac{1}{2}\sum_{l\geq 1}^ {-j-1}(-1)^{l}C_{j+l-k}^l
p_{j+l}^{(l)}(2n+1),\quad j=-2,-4,\cdots.
\end{equation}
By considering (\ref{uprelation}), we have  $l$ odd number dynamical
variable
\begin{eqnarray*}
    u_l&=&p_{1-l}(1)\\
    &=&-\frac{1}{2}\sum_{\mu\geq 1}^
    {l-2}(-1)^{1-l+\mu}C_{1-l+\mu-k}^\mu
p_{1-l+\mu}^{(\mu)}(1)\\
 &=&-\frac{1}{2}\sum_{\mu\geq 1}^
    {l-2}(-1)^{\mu}C_{1-l+\mu-k}^\mu
u_{l-\mu}^{(\mu)}, \quad l=3,5,\cdots.
\end{eqnarray*}
That is,
\begin{equation}\label{coeffbckplax}
    u_l=-\frac{1}{2}\sum_{\mu\geq 1}^
    {l-2}(-1)^{\mu}C_{1-l+\mu-k}^\mu
u_{l-\mu}^{(\mu)}=-\frac{1}{2}\sum_{\mu\geq 1}^
    {l-2}C_{l-2+k}^\mu
u_{l-\mu}^{(\mu)},\quad l=3,5,\cdots .
\end{equation}

So we summarize above results for below. For BKP hierarchy,
\begin{eqnarray}
    u_l&=&-\frac{1}{2}\sum_{\mu\geq 1}^{l-2}C_{l-1}^\mu
u_{l-\mu}^{(\mu)},\quad l=3,5,\cdots,\label{oddcoeffibkplax}\\
p_j(2n+1)&=&-\frac{1}{2}\sum_{l\geq 1}^ {-j-1}(-1)^{l}C_{j+l-1}^l
p_{j+l}^{(l)}(2n+1),\quad j= -2,-4,\cdots.\label{evencoeffibkpnlax}
\end{eqnarray}
For CKP hierarchy,
\begin{eqnarray}
    u_l&=&-\frac{1}{2}\sum_{\mu\geq 1}^{l-2}C_{l-2}^\mu
u_{l-\mu}^{(\mu)},\quad l=3,5,\cdots,\label{oddcoeffickplax}\\
p_j(2n+1)&=&-\frac{1}{2}\sum_{l\geq 1}^ {-j-1}(-1)^{l}C_{j+l}^l
p_{j+l}^{(l)}(2n+1),\quad j=-2,-4,\cdots.\label{evencoeffickpnlax}
\end{eqnarray}

But because the odd dynamical variable and even dynamical variable
of it are not really separated by  eq.(\ref{oddcoeffibkplax},
\ref{oddcoeffickplax}), we next want to separate the odd parts and
even ones from above relation. Before doing this, let's see a lemma
first.
\begin{lemma}\label{oddeven}
If
\begin{equation}\label{lemmacondition}
    p_j=\sum_{\mu\geq 1}^{-j-1}A_{j\mu}p_{j+\mu},\quad
    j=-2,-4,-6,\cdots,
\end{equation}
where $A_{j\mu}$ is an operator, then
\begin{equation}\label{lemmaresult1}
    p_{-2l}=\sum_{\mu=1}^lB_{-2l,-2\mu+1}p_{-2\mu+1}
\end{equation}
with
\begin{equation}\label{lemmaresult2}
    B_{-2l,-2\mu+1}=\sum_{\scriptstyle i_1+\cdots+i_\nu\leq 2l-2\mu
    , \nu\geq0 \atop \scriptstyle
    i_\gamma \rm{is~ positive~ even~ number}}(\prod_{\gamma=1}^\nu
    A_{-2l+i_1+\cdots+i_{\gamma-1},i_\gamma})A_{-2l+i_1+\cdots+i_{\nu},-2\mu+1+2l-i_1-i_2-\cdots-i_\nu}.
\end{equation}
\end{lemma}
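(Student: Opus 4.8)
The plan is to prove the lemma by induction on $l$; this is the same thing as iterating the recursion \eqref{lemmacondition} on the even-indexed $p_j$'s until every index that occurs is odd, after which \eqref{lemmaresult2} is simply a bookkeeping of the finitely many ``paths'' produced. For the base case $l=1$, \eqref{lemmacondition} gives directly $p_{-2}=A_{-2,1}p_{-1}$, while in \eqref{lemmaresult2} with $l=\mu=1$ the constraint $i_1+\cdots+i_\nu\le 0$ with all $i_\gamma$ positive even forces $\nu=0$; the product over $\gamma$ is then empty (the identity operator) and $B_{-2,-1}=A_{-2,\,1}$, so the two expressions agree.

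For the inductive step, assume \eqref{lemmaresult1}--\eqref{lemmaresult2} hold for every $l'<l$. Expand $p_{-2l}$ by \eqref{lemmacondition} and split the $\mu$-sum according to the parity of $\mu$ (equivalently of $-2l+\mu$):
\[
p_{-2l}=\sum_{\substack{1\le\mu\le 2l-1\\ \mu\ \mathrm{odd}}}A_{-2l,\mu}\,p_{-2l+\mu}
\ +\ \sum_{\substack{i_1\ \mathrm{even}\\ 2\le i_1\le 2l-2}}A_{-2l,i_1}\,p_{-2l+i_1}.
\]
In the first sum $-2l+\mu$ runs exactly once over each odd index $-1,-3,\dots,-2l+1$; writing $-2l+\mu=-2\rho+1$ these terms contribute $\sum_{\rho=1}^{l}A_{-2l,\,2l-2\rho+1}\,p_{-2\rho+1}$, which is precisely the collection of $\nu=0$ contributions to $\sum_\rho B_{-2l,-2\rho+1}p_{-2\rho+1}$. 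In the second sum each $-2l+i_1$ is a negative even index $-2l'$ with $l'=l-i_1/2<l$, so the induction hypothesis applies; substituting $p_{-2l'}=\sum_{\rho=1}^{l'}B_{-2l',-2\rho+1}p_{-2\rho+1}$ and keeping the operator order (each $A$ multiplying on the left) turns the second sum into
\[
\sum_{\substack{i_1\ \mathrm{even}\\ 2\le i_1\le 2l-2}}\ \sum_{\rho=1}^{\,l-i_1/2}\bigl(A_{-2l,i_1}\,B_{-2l+i_1,-2\rho+1}\bigr)\,p_{-2\rho+1}.
\]

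Comparing coefficients of $p_{-2\rho+1}$ and invoking the self-similar structure of \eqref{lemmaresult2} --- namely that peeling off the first even step $i_1$ gives $B_{-2l,-2\rho+1}=A_{-2l,\,2l-2\rho+1}+\sum_{i_1\ \mathrm{even},\ i_1\le 2l-2\rho}A_{-2l,i_1}B_{-2l+i_1,-2\rho+1}$ --- completes the induction. The only real work here is verification, not insight: one must check that the index ranges generated by the iteration match the constraints in \eqref{lemmaresult2} (in particular that ``the final odd jump is $\ge 1$'' translates exactly into $i_1+\cdots+i_\nu\le 2l-2\mu$, and that the intermediate even jumps automatically stay in range because partial sums are bounded by the total), that the iteration terminates (the index $-2l+i_1+\cdots$ strictly increases toward $-1$, so only finitely many paths occur for each target), and --- since the $A_{j\mu}$ need not commute --- that they are composed in the correct left-to-right order $A_{-2l,i_1}A_{-2l+i_1,i_2}\cdots$. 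Keeping this operator ordering straight together with the parity bookkeeping in the split above is where most of the care is needed, but no genuine obstacle arises.
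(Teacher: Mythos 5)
Your proposal is correct and follows essentially the same route as the paper: induction on $l$, expanding $p_{-2l}$ via \eqref{lemmacondition}, splitting the sum by parity, applying the induction hypothesis to the even-indexed terms, and re-assembling the coefficients into the defining sum \eqref{lemmaresult2} for $B_{-2l,-2\mu+1}$ (your explicit ``peel off the first even step'' recursion is exactly what the paper's displayed chain of $A$'s encodes). Your treatment of the base case and of the index-range and operator-ordering bookkeeping is, if anything, slightly more careful than the paper's.
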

\begin{proof} We prove the lemma by induction. Obviously the lemma is true
for $l=1$. We next assume the lemma is correct for $\leq l$, then
for $l+1$ case. By (\ref{lemmacondition}),
\begin{eqnarray*}
p_{-2l-2}&=&\sum_{\gamma\geq
1}^{2l+1}A_{-2l-2,\gamma}p_{-2l-2+\gamma}=\sum_{\mu=
1}^{l+1}A_{-2l-2,2\mu-1}p_{-2l-2+2\mu-1}+\sum_{\mu=
1}^{l}A_{-2l-2,2\mu}p_{-2l-2+2\mu}\\
&=&\sum_{\mu=
1}^{l+1}A_{-2l-2,-2\mu+1+2(l+1)}p_{-2\mu+1}+\sum_{\gamma=
1}^{l}A_{-2l-2,2\gamma}p_{-2(l-\gamma+1)}\\
&=&\sum_{\mu= 1}^{l+1}A_{-2l-2,-2\mu+1+2(l+1)}p_{-2\mu+1}\\
&&+\sum_{\mu= 1}^{l+1}\sum_{\gamma=
1}^{l}A_{-2l-2,2\gamma}A_{-2l-2+2\gamma,i_2}A_{-2l-2+2\gamma+i_2,i_3}\cdots\\
&&A_{-2l-2+2\gamma+i_2+\cdots+i_{\nu-1},i_\nu}A_{-2l-2+2\gamma+i_2+\cdots+i_{\nu},-2\mu+1+2(l+1)-2\gamma-i_2-\cdots-i_\nu}p_{-2\mu+1}\\
&=&\sum_{\mu= 1}^{l+1} B_{-2l-2,-2\mu+1}p_{-2\mu+1}.
\end{eqnarray*}
Thus the lemma holds for $l+1$ case.\end{proof}

We apply this lemma to BKP hierarchy and CKP hierarchy cases,
\begin{equation}\label{lemmaapply}
    {A_{j\mu}} =
    -\frac{1}{2}(-1)^{\mu}C_{j+\mu-k}^\mu\pa^\mu=-\frac{1}{2}C_{-j+k-1}^\mu\pa^\mu
=\left\{ {\begin{array}{*{20}{c}}
   { - \frac{1}{2}C_{-j}^\mu{\partial ^\mu},k=1, BKP }, \\
   {- \frac{1}{2}C_{-j-1}^\mu{\partial ^\mu},k=0, CKP}.\\
\end{array}} \right.
\end{equation}
So the corresponding $B$ operators are,
\begin{eqnarray}
&&{B_{-2l,-2\mu + 1}} \nonumber\\
&=&\mathop\sum \limits_{{i_1},{i_2}, \cdots ,{i_\nu }} {( -
\frac{1}{2})^{\nu  + 1}}C_{2l - 1 + k}^{{i_1}}C_{2l - 1 + k- {i_1}
}^{{i_2}} \cdots C_{2l - 1 + k- {i_1} - {i_2} -  \cdots  - {i_{\nu -
1}} }^{{i_\nu }}C_{2l - 1 + k - {i_1} -  \cdots  - {i_\nu }}^{ -
2\mu  + 1 + 2l - {i_1} -  \cdots  - {i_\nu }}{\partial ^{2l - 2\mu +
1}}\nonumber\\
 &= &\left\{ {\begin{array}{*{20}{c}}
   {\sum\limits_{{i_1},\cdots,{i_\nu }} {{{( - \frac{1}{2})}^{\nu  + 1}}C_{2l}^{{i_1}}C_{2l - {i_1}}^{{i_2}} \cdots C_{2l - {i_1} - {i_2} -  \cdots  - {i_{\nu  - 1}}}^{{i_\nu }}C_{2l - {i_1} -  \cdots  - {i_\nu }}^{ - 2\mu + 1 + 2l - {i_1} -  \cdots  - {i_\nu }}{\partial ^{2l - 2\mu + 1}},\quad\quad\quad BKP,} }  \\
   {\sum\limits_{{i_1},\cdots,{i_\nu }} {{{( - \frac{1}{2})}^{\nu  + 1}}C_{2l - 1}^{{i_1}}C_{2l - 1 - {i_1}}^{{i_2}} \cdots C_{2l - 1 - {i_1} - {i_2} -  \cdots  - {i_{\nu  - 1}}}^{{i_\nu }}C_{2l - 1 - {i_1} -  \cdots  - {i_\nu }}^{ - 2\mu + 1 + 2l - {i_1} -  \cdots  - {i_\nu }}{\partial ^{2l - 2\mu + 1}},CKP.} }  \\
\end{array}} \right.\nonumber\\
\label{lemmaapplyresult}
\end{eqnarray}
So according to Lemma \ref{oddeven}, we get
\begin{equation}\label{lemmaresultforbkpckp}
    p_{-2l}(2n+1)=\sum_{\mu=1}^lB_{-2l,-2\mu+1}p_{-2\mu+1}(2n+1).
\end{equation}

\begin{proposition}\label{oddu(2l+1)}
All the odd dynamical variables $u_{2l+1}$ of Lax operator $L$ can
be expressed by the even dynamical variables $u_{2\mu}(\mu\leq l)$,
that is,
\begin{equation}\label{uformual}
u_{2l+1}=\sum_{\mu=1}^lB_{-2l,-2\mu+1}u_{2\mu},
\end{equation}
where $B_{-2l,-2\mu+1}$ is defined by (\ref{lemmaapplyresult}).
\end{proposition}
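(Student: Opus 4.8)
The plan is to read the claimed identity off the preceding development: it is precisely the $n=0$ instance of \eqref{lemmaresultforbkpckp}, transported through the dictionary \eqref{uprelation}.

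First I would set up that dictionary. By \eqref{uprelation} one has $u_l=p_{1-l}(1)$ for $l\ge 2$, so the odd variable $u_{2l+1}$ equals $p_{-2l}(1)$ and the even variable $u_{2\mu}$ equals $p_{-2\mu+1}(1)$. The Lax operator $L=L^{1}$ is the $n=0$ case of $L^{2n+1}$, so every structural relation of this section valid for the coefficients $p_j(2n+1)$ specializes to $p_j(1)$. In particular the odd-coefficient recursions \eqref{oddcoeffibkplax} (BKP) and \eqref{oddcoeffickplax} (CKP) become, after the substitutions $u_{l-\mu}^{(\mu)}=\partial^\mu p_{1-l+\mu}(1)$ and $j=1-l$, exactly the hypothesis \eqref{lemmacondition} of Lemma \ref{oddeven}: the upper limit $l-2$ matches $-j-1$, and the coefficient operator is $A_{j\mu}=-\frac{1}{2}C_{-j}^\mu\partial^\mu$ when $k=1$ and $A_{j\mu}=-\frac{1}{2}C_{-j-1}^\mu\partial^\mu$ when $k=0$, i.e. exactly \eqref{lemmaapply}.

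Next I would apply Lemma \ref{oddeven} with these $A_{j\mu}$. Its conclusion \eqref{lemmaresult1}--\eqref{lemmaresult2}, specialized to $n=0$, reads $p_{-2l}(1)=\sum_{\mu=1}^{l}B_{-2l,-2\mu+1}\,p_{-2\mu+1}(1)$, with $B_{-2l,-2\mu+1}$ the operator assembled from the $A$'s, namely the explicit sum over compositions of $2l-2\mu$ into positive even parts recorded in \eqref{lemmaapplyresult}. Finally, translating both sides back through \eqref{uprelation} turns this into $u_{2l+1}=\sum_{\mu=1}^{l}B_{-2l,-2\mu+1}u_{2\mu}$, which is \eqref{uformual}. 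Along the way one checks that the indices $2\mu$ on the right-hand side run over $2,4,\dots,2l$ only, so the normalizations $u_0=1$ and $u_1=0$ never intrude.

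Since Lemma \ref{oddeven} is already established, there is essentially no obstacle left. The only points requiring care are the bookkeeping of summation ranges when identifying \eqref{oddcoeffibkplax}--\eqref{oddcoeffickplax} with \eqref{lemmacondition}, and carrying the two values $k=0,1$ through consistently so that the BKP and CKP forms of $B_{-2l,-2\mu+1}$ in \eqref{lemmaapplyresult} come out correctly. The genuine combinatorial content — the closed form of $B_{-2l,-2\mu+1}$ as that sum over compositions — lives entirely in the inductive proof of that lemma.
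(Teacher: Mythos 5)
Your proposal is correct and follows essentially the same route as the paper: the paper's own proof is exactly the specialization of \eqref{lemmaresultforbkpckp} to the case $2n+1=1$, translated through $u_{2l+1}=p_{-2l}(1)$ and $u_{2\mu}=p_{-2\mu+1}(1)$ from \eqref{uprelation}. Your additional bookkeeping identifying \eqref{oddcoeffibkplax}--\eqref{oddcoeffickplax} with the hypothesis \eqref{lemmacondition} via \eqref{lemmaapply} is just the step the paper performs implicitly before stating \eqref{lemmaresultforbkpckp}.
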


\begin{proof}
From eq. (\ref{lemmaresultforbkpckp}), in particular,
$u_{2l+1}=p_{-2l}(1)$ for $l=1,2,\cdots$. Then  we find
\begin{equation}
u_{2l+1} = p_{-2l}(1) =\sum_{\mu=1}^lB_{-2l,-2\mu+1}p_{-2\mu+1}(1)
 = \sum_{\mu=1}^lB_{-2l,-2\mu+1}u_{2\mu}.\nonumber
\end{equation}
\end{proof}

The equation (\ref{uformual}) is crucial to calculate the flow
equation in  section \ref{sectionflow} and the recursion operator in
section \ref{sectionrecursion}. From the above relation of $u_j$,
one can obtain that all the odd item $u_{2l+1}$ can be expressed by
the even item $u_{2\mu}$, where $\mu\leq l$.

Below let's see some examples of (\ref{uformual}). Firstly we will
deal with the BKP hierarchy.

For $l=1$, we find
$\mu=1$,$B_{-2l,-2\mu+1}=B_{-2,-1}=-\frac{1}{2}C_2^1\pa=-\pa$, thus
$u_3=-u_{2x}$.

For $l=2$, then $\mu=1,\mu=2$, thus one only need $B_{-4,-1}$ and
$B_{-4,-3}$, while
\begin{eqnarray*}
B_{-4,-1}&=&-\frac{1}{2}C_4^3\pa^3+(-\frac{1}{2})^2C_4^2C_2^1\pa^3=\pa^3,\\
B_{-4,-3}&=&-\frac{1}{2}C_4^1\pa=-2\pa.
\end{eqnarray*}
So
$$u_5=u_{2xxx}-2u_{4x}.$$

For $l=3$, $\mu=1,2,3$, $B_{-6,-1},B_{-6,-3}$ and $B_{-6,-5}$ have
the form
\begin{eqnarray*}
B_{-6,-1}&=&\{(-\frac{1}{2})C_6^5+(-\frac{1}{2})^2C_6^2C_4^3+(-\frac{1}{2})^2C_6^4C_2^1+(-\frac{1}{2})^3C_6^2C_4^2C_2^1\}\pa^5=-3\pa^5,\\
B_{-6,-3}&=&\{(-\frac{1}{2})C_6^3+(-\frac{1}{2})^2C_6^2C_4^1\}\pa^3=5\pa^3,\\
B_{-6,-5}&=&(-\frac{1}{2})C_6^1\pa=-3\pa.
\end{eqnarray*}
So
\begin{equation*}
   u_7=-3u_{2xxxxx}+5u_{4xxx}-3u_{6x}.
\end{equation*}

For $l=4$ and $\mu=1,2,3,4$,$B_{-8,-1},B_{-8,-3},B_{-8,-5}$ and
$B_{-8,-7}$ have the form
\begin{eqnarray*}
B_{-8,-1}&=&\{(-\frac{1}{2})C_8^7+(-\frac{1}{2})^2C_8^2C_6^5+(-\frac{1}{2})^3C_8^2C_6^2C_4^3+(-\frac{1}{2})^2C_8^4C_4^3+(-\frac{1}{2})^2C_8^6C_2^1\\
&+&(-\frac{1}{2})^3C_8^2C_6^4C_2^1+(-\frac{1}{2})^3C_8^4C_4^2C_2^1+(-\frac{1}{2})^4C_8^2C_6^2C_4^2C_2^1\}\pa^7=17\pa^7,\\
B_{-8,-3}&=&\{(-\frac{1}{2})C_8^5+(-\frac{1}{2})^2C_8^2C_6^3+(-\frac{1}{2})^2C_8^4C_4^1+(-\frac{1}{2})^3C_8^2C_6^2C_4^1\}\pa^5=-28\pa^5,\\
B_{-8,-5}&=&\{(-\frac{1}{2})C_8^3+(-\frac{1}{2})^2C_8^2C_6^1\}\pa^3=14\pa^3,\\
B_{-8,-7}&=&(-\frac{1}{2})C_8^1\pa=-4\pa.
\end{eqnarray*}
So
\begin{equation*}
   u_9=17u_{2xxxxxxx}-28u_{4xxxxx}+14u_{6xxx}-4u_{8x}.
\end{equation*}

Then we consider the examples of CKP hierarchy.

When $l=1$ and $\mu=1$,
$B_{-2,-1}=-\frac{1}{2}C_1^1\pa=-\frac{1}{2}\pa$. Thus
$u_3=-\frac{1}{2}u_{2x}.$

For $l=2$, then $\mu=1,2$, thus only need $B_{-4,-1}$ and
$B_{-4,-3}$, while
\begin{eqnarray*}
B_{-4,-1}&=&-\frac{1}{2}C_3^3\pa^3+(-\frac{1}{2})^2C_3^2C_1^1\pa^3=\frac{1}{4}\pa^3,\\
B_{-4,-3}&=&-\frac{1}{2}C_3^1\pa=-\frac{3}{2}\pa.
\end{eqnarray*}
So
$$u_5=\frac{1}{4}u_{2xxx}-\frac{3}{2}u_{4x}.$$

For $l=3$, $\mu=1,2,3$, $B_{-6,-1},B_{-6,-3}$ and $B_{-6,-5}$ have
the form
\begin{eqnarray*}
B_{-6,-1}&=&\{(-\frac{1}{2})C_5^5+(-\frac{1}{2})^2C_5^2C_3^3+(-\frac{1}{2})^2C_5^4C_1^1+(-\frac{1}{2})^3C_5^2C_3^2C_1^1\}\pa^5=-\frac{1}{2}\pa^5,\\
B_{-6,-3}&=&\{(-\frac{1}{2})C_5^3+(-\frac{1}{2})^2C_5^2C_3^1\}\pa^3=\frac{5}{2}\pa^3,\\
B_{-6,-5}&=&(-\frac{1}{2})C_5^1\pa=-\frac{5}{2}\pa.
\end{eqnarray*}
So
\begin{equation*}
   u_7=-\frac{1}{2}u_{2xxxxx}+\frac{5}{2}u_{4xxx}-\frac{5}{2}u_{6x}.
\end{equation*}

For $l=4$, $\mu=1,2,3,4$, $B_{-8,-1},B_{-8,-3},B_{-8,-5}$ and
$B_{-8,-7}$ have the  form below
\begin{eqnarray*}
B_{-8,-1}&=&\{(-\frac{1}{2})C_7^7+(-\frac{1}{2})^2C_7^2C_5^5+(-\frac{1}{2})^3C_7^2C_5^2C_3^3+(-\frac{1}{2})^2C_7^4C_3^3+(-\frac{1}{2})^2C_7^6C_1^1,\\
&+&(-\frac{1}{2})^3C_7^2C_5^4C_1^1+(-\frac{1}{2})^3C_7^4C_3^2C_1^1+(-\frac{1}{2})^4C_7^2C_5^2C_3^2C_1^1\}\pa^7=\frac{17}{8}\pa^7,\\
B_{-8,-3}&=&\{(-\frac{1}{2})C_7^5+(-\frac{1}{2})^2C_7^2C_5^3+(-\frac{1}{2})^2C_7^4C_3^1+(-\frac{1}{2})^3C_7^2C_5^2C_3^1\}\pa^5=-\frac{21}{2}\pa^5,\\
B_{-8,-5}&=&\{(-\frac{1}{2})C_7^3+(-\frac{1}{2})^2C_7^2C_5^1\}\pa^3=\frac{35}{4}\pa^3,\\
B_{-8,-7}&=&(-\frac{1}{2})C_7^1\pa=-\frac{7}{2}\pa.
\end{eqnarray*}
So
\begin{equation*}
   u_9=\frac{17}{8}u_{2xxxxxxx}-\frac{21}{2}u_{4xxxxx}+\frac{35}{4}u_{6xxx}-\frac{7}{2}u_{8x}.
\end{equation*}
We summarize above results below.

For BKP,
\begin{eqnarray}\label{bkpconstrain}
\begin{cases}
u_3=-u_{2x},\\
u_5=u_{2xxx}-2u_{4x},\\
u_7=-3u_{2xxxxx}+5u_{4xxx}-3u_{6x},\\
u_9=17u_{2xxxxxxx}-28u_{4xxxxx}+14u_{6xxx}-4u_{8x},\\
\cdots
\end{cases}
\end{eqnarray}

For CKP,
\begin{eqnarray}\label{ckpconstrain}
\begin{cases}
u_3=-\frac{1}{2}u_{2x}\\
u_5=\frac{1}{4}u_{2xxx}-\frac{3}{2}u_{4x},\\
u_7=-\frac{1}{2}u_{2xxxxx}+\frac{5}{2}u_{4xxx}-\frac{5}{2}u_{6x},\\
u_9=\frac{17}{8}u_{2xxxxxxx}-\frac{21}{2}u_{4xxxxx}+\frac{35}{4}u_{6xxx}-\frac{7}{2}u_{8x},\\
\cdots
\end{cases}
\end{eqnarray}

\textbf{Remark:} From (\ref{uformual}), one can know there are only
the even dynamical variables of $\{u_j,j\geq1\}$ are independent,
and the odd dynamical variables of $\{u_j,j\geq1\}$ can be expressed
by the even ones of $\{u_j,j\geq1\}$. With this result, it is nature
to discuss the odd flows of even dynamical variables in the next
section.

\section{Flow Equations}\label{sectionflow}
 We next deal with the BKP hierarchy and CKP hierarchy in an unified
 way. First we derive the flow equations of the dynamical variables
 $u_{2j}$. Inserting (\ref{lax}) and (\ref{nlax}) into
(\ref{bckphierarchy}), one finds
\begin{eqnarray*}
&&L(L^{2m+1})_--(L^{2m+1})_-L\\
&=&\sum_{r\geq
0}\sum_{h>0}(u_r\pa^{1-r}p_{-h}(2m+1)\pa^{-h}-p_{-h}(2m+1)\pa^{-h}u_r\pa^{1-r})\\
&=&\sum_{r\geq 0}\sum_{\alpha\geq 0}\sum_{h>0}(C_{1-r}^\alpha
u_rp_{-h}^{(\alpha)}(2m+1)-C_{-h}^\alpha
p_{-h}(2m+1)u_r^{(\alpha)})\pa^{1-r-h-\alpha}\\
&=&\sum_{l\geq 0}\sum_{r\geq 0}^l\sum_{h>0}(C_{1-r}^{l-r}
u_rp_{-h}^{(l-r)}(2m+1)-C_{-h}^{l-r}
p_{-h}(2m+1)u_r^{(l-r)})\pa^{1-l-h}\\
&=&\sum_{j\geq 1}\sum_{h\geq 1}^j\sum_{r\geq
0}^{j-h}(C_{1-r}^{j-h-r} u_rp_{-h}^{(j-h-r)}(2m+1)-C_{-h}^{j-h-r}
p_{-h}(2m+1)u_r^{(j-h-r)})\pa^{1-j}.
\end{eqnarray*}
Comparing with $L_{t_{2m+1}}=\sum_{j\geq 0}u_{j,t_{2m+1}}\pa^{1-j}$,
we have
\begin{equation}\label{flowequationbc}
    u_{0,t_{2m+1}}=0,\quad u_{1,t_{2m+1}}=0,\quad
    u_{j,t_{2m+1}}=\sum_{h=1}^j O_{j,h}p_{-h}(2m+1),
\end{equation}
where
\begin{equation}\label{oexpression}
    O_{j,h}=\sum_{r\geq
0}^{j-h}(C_{1-r}^{j-h-r} u_r\pa^{j-h-r}-C_{-h}^{j-h-r}
u_r^{(j-h-r)}).
\end{equation}
In particular, $O_{j,j}=0,O_{j,j-1}=\pa.$

With (\ref{lax}) and (\ref{nlax}), $p_j(2n+1)$ can be uniquely
determined by $u_2, u_3, \cdots, u_{2n+1-j}$, i. e.  it's formula is
\begin{equation}\label{pformula}
    p_j(2n+1)=(2n+1)u_{2n+1-j}+f_{jn}(u_2,u_3, \cdots, u_{2n-j}),j\leq
    2n+1
\end{equation} and $f_{jn}$ is a differential polynomials in $u_2,u_3, \cdots,
u_{2n-j}$. With the help of Proposition \ref{oddu(2l+1)}, every
dynamical variable can be expressed by the even ones. So $p_j(2n+1)$
can be expressed by $u_2,u_4,\cdots,u_{2n+1-j}$ which $j$ is odd.
Now we consider the $(2n+1)$-reduction, i. e.  for some fixed $2n+1,
n \in \mathbb{Z_+}$,
\begin{equation}\label{nreduction}
    L^{2n+1}=(L^{2n+1})_+.
\end{equation}
This relation is equal to requiring the $p_j(2n+1)=0$ for $j<0$.
Hence, one can recursively express all coordinates $u_j$ with $j\geq
2n+1$ in terms of $(u_2,u_3, \cdots, u_{j-1})$. But thanks for
(\ref{uformual}), all the odd dynamical variables $u_{2l+1}$ can be
express by the even dynamical variables $u_{2\mu}$, where $\mu\leq
l$. So only first $n$ coordinates $(u_2,u_4,\cdots,u_{2n})$ are
independent for BKP(CKP) hierarchy in the case of
$(2n+1)$-reduction.

On the other hand, according to formula (\ref{lemmaresultforbkpckp})
and the third formula of (\ref{flowequationbc}), one has
\begin{eqnarray}
u_{2j,t_{2m+1}}&=&\sum_{h=1}^{j}O_{2j,2h-1}p_{-2h+1}(2m+1)+\sum_{h=1}^{j}O_{2j,2h}p_{-2h}(2m+1)\nonumber\\
&=&
\sum_{h=1}^{j}O_{2j,2h-1}p_{-2h+1}(2m+1)+\sum_{h=1}^{j}\sum_{\mu=1}^hO_{2j,2h}B_{-2h,-2\mu+1}p_{-2\mu+1}(2m+1)\nonumber\\
&=&
\sum_{h=1}^{j}O_{2j,2h-1}p_{-2h+1}(2m+1)+\sum_{\mu=1}^{j}\sum_{h=\mu}^jO_{2j,2h}B_{-2h,-2\mu+1}p_{-2\mu+1}(2m+1)\nonumber\\
&=&\sum_{h=1}^{j}\Big(O_{2j,2h-1}+\sum_{\mu=h}^jO_{2j,2\mu}B_{-2\mu,-2h+1}\Big)p_{-2h+1}(2m+1).\nonumber
\end{eqnarray}
Thus,
\begin{equation}\label{evenflowequation}
    u_{2j,t_{2m+1}}=\sum_{h=1}^{j}Q_{jh}p_{-2h+1}(2m+1), \quad j\leq n,
\end{equation}
which
\begin{equation}\label{qexpression}
Q_{jh}=O_{2j,2h-1}+\sum_{\mu=h}^jO_{2j,2\mu}B_{-2\mu,-2h+1},\quad
j\leq n.
\end{equation}
Obviously, $Q_{jj}=\pa$. So with the above formula
(\ref{evenflowequation}) and the help of equations
(\ref{bkpconstrain}, \ref{ckpconstrain}), all odd flow equations of
the even dynamical coordinate $u_{2j}$ can be obtained. This result
implies that the flow equations are expressed by even dynamical
variable ${u_2,u_4,\cdots,u_{2n}}$. Due to the appearance of the
operator $B$ (\ref{lemmaapplyresult}), the flow equation of KP
hierarchy, BKP hierarchy and CKP hierarchy are different.

We present some odd flow equations below calculated by Maple. The
first several odd flow equations of BKP hierarchy are
\begin{eqnarray}\label{bkpu2tflow}
\begin{cases}
u_{2,t_1}=u_{2,x},\\
u_{2,t_3}=6u_{2}u_{2,x}+3u_{4,x}-2u_{2,xxx},\\
u_{2,t_5} =
20u_2u_{4,x}+20u_4u_{2,x}+10u_2u_{2,xxx}+5u_{6,x}+60u_{2,x}u_{2,xx}-\frac{2}{3}u_{2,xxxxx}+30u_2^2u_{2,x},\\
u_{2,t_7} =
\frac{10}{3}u_{2,xxxxxxx}+406u_{2,x}u_2u_{2,xx}+210u_2u_4u_{2,x}+7u_{8,x}-7u_{4,xxxxx}\\
\quad\quad +14u_{6,xxx}+112u_{2,x}^3+42u_2u_{6,x}+42u_4u_{4,x}+42u_6u_{2,x}+49u_2u_{4,xxx}\\
\quad\quad
+98u_4u_{2,xxx}+49u_2u_{2,xxxxx}+203u_{2,x}u_{4,xx}+252u_{4,x}u_{2,xx}+294u_{2,x}u_{2,xxxx}\\
\quad\quad
+609u_{2,xx}u_{2,xxx}+105u_2^2u_{4,x}+91u_2^2u_{2,xxx}+140u_2^3u_{2,x}.
\end{cases}
\end{eqnarray}

The first several odd flow equations of CKP hierarchy are
\begin{eqnarray}\label{ckpu2tflow}
\begin{cases}
u_{2,t_1}= u_{2,x},\\
u_{2,t_3}= 6u_2u_{2,x}+3u_{4,x}-\frac{1}{2}u_{2,xxx},\\
u_{2,t_5}=20u_2u_{4,x}+20u_4u_{2,x}+5u_{6,x}+95u_{2,x}u_{2,xx}+30u_2u_{2,xxx}\\
\quad\quad +10u_{4,xxx}+30u_2^2u_{2,x}-\frac{3}{2}u_{2,xxxxx},\\
u_{2,t_7}=42u_2u_{6,x}+42u_4u_{4,x}+42u_6u_{2,x}+49u_2u_{4,xxx}+98u_4u_{2,xxx}+35u_2u_{2,xxxxx}\\
\quad\quad +7u_{8,x}+\frac{385}{2}u_{2,x}u_{4,xx}+\frac{483}{2}u_{2,xx}u_{4,x}+154u_{2,x}u_{2,xxxx}+287u_{2,xx}u_{2,xxx}\\
\quad\quad
+105u_2^2u_{4,x}+91u_2^2u_{2,xxx}+140u_2^3u_{2,x}+210u_2u_4u_{2,x}+\frac{791}{2}u_{2,x}u_2u_{2,xx}\\
\quad\quad
+14u_{6,xxx}+\frac{427}{4}u_{2,x}^3-7u_{4,xxxxx}+\frac{13}{6}u_{2,xxxxxxx}.\\
\end{cases}
\end{eqnarray}

One can see the first flow equations (\ref{bkpu2tflow},
\ref{ckpu2tflow}) of BKP(CKP) hierarchy
 are trivial equations. If
consider the $3$-reduction (\ref{nreduction}) when $n=1$, we can
calculate the $t_7$ flow  from the $t_1$ flow. The  first three
equations of $3$-reduction  of BKP hierarchy are
\begin{eqnarray}\label{bkp3reduction}
\begin{cases}
u_4 = -u_2^2+\frac{2}{3}u_{2,xx},\\
u_6
=-2u_2u_4-\frac{11}{3}u_{2,x}^2-\frac{1}{3}u_2^3-\frac{7}{3}u_2u_{2,xx}+\frac{2}{3}u_{4,xx}-\frac{1}{3}u_{2,xxxx},\\
u_8
 =-2u_2u_6-\frac{7}{3}u_2u_{4,xx}+9u_2u_{2,x}^2-\frac{1}{3}u_2^2u_{2,xx}-\frac{32}{3}u_{2,x}u_{4,x}-10u_4u_{2,xx}\\
\quad\quad +\frac{7}{3}u_{2,xx}^2+\frac{2}{3}u_{6,xx}
-\frac{1}{3}u_{4,xxxx}+\frac{1}{9}u_{2,xxxxxx}-u_4^2-u_2^2u_4.
\end{cases}
\end{eqnarray}
If one  substitute (\ref{bkp3reduction}) into (\ref{bkpu2tflow}),
then the $t_7$ flow equation of $u_2$ (\ref{bkpu2tflow}) of BKP
hierarchy can be reduced for
\begin{eqnarray}\label{bkpt7flow}
u_{2,t_7}=
&-&\frac{7}{9}u_2u_{2,xxxxx}-\frac{14}{9}u_{2,x}u_{2,xxxx}-\frac{7}{3}u_{2,xx}u_{2,xxx}-\frac{14}{3}u_{2,xxx}u_{2}^2\nonumber\\
&-&\frac{28}{3}u_2^3u_{2,x}-14u_{2,x}u_2u_{2,xx}-\frac{7}{3}u_{2,x}^3-\frac{1}{27}u_{2,xxxxxxx}.
\end{eqnarray}

The first three   equations of $3$-reduction of CKP hierarchy are
\begin{eqnarray}\label{ckp3reduction}
\begin{cases}
u_4 = -u_2^2+\frac{1}{6}u_{2,xx},\\
u_6 =-2u_2u_4-\frac{17}{12}u_{2,x}^2-\frac{1}{3}u_2^3-\frac{4}{3}u_2u_{2,xx}+\frac{2}{3}u_{4,xx}-\frac{1}{6}u_{2,xxxx},\\
u_8 =-2u_2u_6+\frac{9}{4}u_2u_{2,x}^2-\frac{49}{6}u_{2,x}u_{4,x}+\frac{2}{3}u_{2,x}u_{2,xxx}-u_4^2-\frac{7}{3}u_2u_{4,xx}-\frac{1}{3}u_2u_{2,xxxx}\\
\quad\quad
-\frac{4}{3}u_2^2u_{2,xx}-8u_{4}u_{2,xx}+\frac{17}{12}u_{2,xx}^2+\frac{2}{3}u_{6,xx}
-\frac{1}{3}u_{4,xxxx}+\frac{1}{18}u_{2,xxxxxx}-u_2^2u_4.\\
\end{cases}
\end{eqnarray}
If one  substitute (\ref{ckp3reduction}) into (\ref{ckpu2tflow}),
then the $t_7$ flow equation of $u_2$ (\ref{ckpu2tflow}) of CKP
hierarchy can be reduced for
\begin{eqnarray}\label{ckpt7flow}
u_{2,t_7}=
&-&\frac{35}{6}u_{2,x}^3-\frac{28}{3}u_2^3u_{2,x}-\frac{14}{3}u_2^2u_{2,xxx}-\frac{14}{3}u_{2,xx}u_{2,xxx}-21u_{2,x}u_2u_{2,xx}\nonumber\\
&-&\frac{49}{18}u_{2,x}u_{2,xxxx}-\frac{1}{27}u_{2,xxxxxxx}-\frac{7}{9}u_2u_{2,xxxxx}.
\end{eqnarray}

But it is not easy to find a relation between the more other higher
order flow equations and the lower order flow equations. And we will
find the recursion operator which can generate the higher order flow
equations from the lower order flow equations in the next section.

For $(2n+1)$-reduction, it only has the odd reduction and even
dynamical variable in the BKP(CKP) hierarchy. If we denote
\begin{eqnarray*}
 \widehat{U}(2n) &=& {({u_2},{u_4}, \cdots ,{u_{2n}})^t}, \\
 \widehat{P}(2n+1,2m+1) &= &{({p_{ - 1}}(2m + 1),{p_{ - 3}}(2m + 1), \cdots ,{p_{ - 2n + 1}}(2m + 1))^t}, \\
 Q(n) &=& \left( {\begin{array}{*{20}{c}}
   {{Q_{11}}} & 0 &  \cdots  & 0  \\
   {{Q_{21}}} & {{Q_{22}}} &  \cdots  & 0  \\
    \vdots  &  \vdots  &  \ddots  &  \vdots   \\
   {{Q_{n1}}} & {{Q_{n2}}} &  \cdots  & {{Q_{nn}}}  \\
\end{array}} \right),
\end{eqnarray*} where the up index $t$ denotes the transpose of the matrix,
then (\ref{evenflowequation}) can be rewritten for
\begin{equation}\label{nflowequation}
    \widehat{U}(2n)_{t_{2m+1}}=Q(n)\widehat{P}(2n+1,2m+1).
\end{equation}
It is trivial to know that all the flow equations in
$\widehat{U}(2n)_{t_{2m+1}}$ are local. Next, we want to study the
recursion relation between $t_{2m+1+2p(2n+1)}$ flow and $t_{2m+1}$
flow.

\section{Recursion Operator}\label{sectionrecursion}
In this section, we will discuss the recursion operator of BKP
hierarchy and CKP hierarchy starting from the recursion operator of
KP hierarchy.  To do this, we must find a recursion formula relation
between $\widehat{U}(2n)_{t_{2m+1}}$ and
$\widehat{U}(2n)_{t_{2m+4n+3}}$ under the $(2n+1)$-reduction
constraint. That is, we try to find an operator
$\widehat{\Phi}(2n+1)$, s.t.
$\widehat{U}(2n)_{t_{2m+4n+3}}=\widehat{\Phi}(2n+1)\widehat{U}(2n)_{t_{2m+1}}$.
Recall the result of the recursion operator of KP hierarchy
\cite{ovel1,cjp1} under $n$-reduction, we have
\begin{equation}
    P(n,m+n)=R(n)P(n,m),
\end{equation}
where
\begin{eqnarray*}
P(n,m)&=&(p_{-1}(m),p_{-2}(m),\cdots,p_{-n+1}(m))^t,\\
R(n)&=&S(n)-T(n)M(n)^{-1}N(n),
\end{eqnarray*}
\begin{eqnarray*}
S(n)& =& \left( {\begin{array}{*{20}{c}}
   {{C_{- 1,0}}(n)} & {{C_{ - 1,1}}(n)} &  \cdots  & {{C_{ - 1,n - 2}}(n)}  \\
   {{C_{ - 2, - 1}}(n)} & {{C_{ - 2,0}}(n)} &  \cdots  & {{C_{ - 2,n - 3}}(n)}  \\
    \vdots  &  \vdots  &  \ddots  &  \vdots   \\
   {{C_{ - n + 1 , - n + 2}}(n)} & {{C_{ - n + 1 , - n + 3}}(n)} &  \cdots  & {{C_{ - n + 1 ,0}}(n)}. \\
\end{array}} \right)_{(n-1)\times(n-1)},
\end{eqnarray*}
\begin{eqnarray*}
T(n) &= &\left( {\begin{array}{*{20}{c}}
   {{C_{- 1,n - 1}}(n)} & {{C_{ - 1,n}}(n)} & 0 &  \cdots  & 0  \\
   {{C_{- 2,n - 2}}(n)} & {{C_{- 2,n - 1}}(n)} & {{C_{ - 2,n}}(n)} &  \cdots  & 0  \\
    \vdots  &  \vdots  &  \vdots  &  \ddots  &  \vdots   \\
   {{C_{ - n + 1 ,1}}(n)} & {{C_{ - n + 1 ,2}}(n)} & {{C_{ - n + 1 ,3}}(n)} &  \cdots  & {{C_{ - n + 1 ,n}}(n)}\\
\end{array}} \right)_{(n-1)\times n},
\end{eqnarray*}
\begin{eqnarray*}
M(n)& =  &\left( {\begin{array}{*{20}{c}}
   { - n\partial  }& 0 &  \cdots  & 0  \\
   {{D_{ - 2 ,n - 2}}(n)} & { - n\partial }&  \cdots  & 0  \\
    \vdots  &  \vdots  &  \ddots  &  \vdots   \\
   {{D_{ - n ,0}}(n)} & {{D_{ - n ,1}}(n)} &  \cdots  & { - n\partial }\\
\end{array}} \right)_{n\times n},
\end{eqnarray*}

\begin{eqnarray*}
 N(n)&= & \left( {\begin{array}{*{20}{c}}
   {{D_{ - 1 ,0}}(n)} & {{D_{ - 1 ,1}}(n)} &  \cdots  & {{D_{ - 1 ,n - 3}}(n)} & {{D_{ - 1 ,n - 2}}(n)}  \\
   {{C_{ - 2 , - 1}}(n)} & {{D_{ - 2,0}}(n)} &  \cdots  & {{D_{ - 2 ,n - 4}}(n)} & {{D_{ - 2 ,n - 3}}(n)}  \\
    \vdots  &  \vdots  &  \ddots  &  \vdots  &  \vdots   \\
   {{C_{ - n + 1 , - n + 2}}(n)} & {{C_{ - n + 1 , - n + 3}}(n)} &  \cdots  & {{C_{ - n + 1 , - 1}}(n)} & {{D_{ - n + 1 ,0}}(n)}  \\
   {{C_{ - n , - n + 1}}(n)} & {{C_{ - n , - n + 2}}(n)} &  \cdots  & {{C_{ - n, - 2}}(n)} & {{C_{ - n , - 1}}(n)}  \\
\end{array}} \right)_{n\times(n-1) },
\end{eqnarray*}

\begin{eqnarray*}
C_{j,\mu}(n)&=&\sum_{l=max(0,\mu)}^n
C_{j-\mu}^{l-\mu}p_l^{(l-\mu)}(n),\\
D_{j,s}&=&C_{j,s}(n)-\widetilde{C}_{s}(n),\\
\widetilde{C}_s(n)&=&\sum_{\mu=
0}^{n-s}C_{s+\mu}^sp_{s+\mu}(n)\pa^\mu.
\end{eqnarray*}

If we set $\Phi(n) = Q(n)R(n)Q^{-1}(n)$, then the recursion formula
of  KP hierarchy  \cite{ovel1,cjp1} is
\begin{equation}\label{kprecursion}
U(n)_{t_{m+jn}}=\Phi^j(n)U(n)_{t_m},
\end{equation}
 where
$U(n)=(u_2,u_3,u_4,\cdots,u_{n-1},u_n)^t$. If  we  substitute $2m+1$
for $m$, $2n+1$ for  $n$ and   $j=2$ in (\ref{kprecursion}),  we
have
\begin{equation}\label{bckprecursion}
U(2n+1)_{t_{2m+1+2(2n+1)}}=\Phi^{2}(2n+1)U(2n+1)_{t_{2m+1}}.
\end{equation}
We consider the even element of $U(2n+1)$, which are the dynamical
variables of BKP(CKP) hierarchy. Then it is necessary to calculate
the odd flow equations of the even dynamical variables
$u_{2k,t_{2m+1+2p(2n+1)}}$. And if let $\Phi(n)_{i,j}$ denote the
$(i,j)$-th element of the matrix $\Phi(n)$, from
(\ref{bckprecursion}), the $(2k-1)$-th elements of
$U(2n+1)_{t_{2m+1+2(2n+1)}}$ are
\begin{eqnarray}\label{phi}
\lefteqn{u_{2k,t_{2m+1+2(2n+1)}}=\sum_{i=1}^{2n}(\Phi^{2}(2n+1))_{2k-1,i}u_{i+1,t_{2m+1}}}\nonumber\\
&=&\sum_{i=1}^{n}(\Phi^{2}(2n+1))_{2k-1,2i-1}u_{2i,t_{2m+1}}+\sum_{i=1}^{n}(\Phi^{2}(2n+1))_{2k-1,2i}u_{2i+1,t_{2m+1}}\nonumber\\
&=&\sum_{i=1}^{n}(\Phi^{2}(2n+1))_{2k-1,2i-1}u_{2i,t_{2m+1}}+\sum_{i=1}^{n}{\sum_{\mu=1}^{i}(\Phi^{2}(2n+1))_{2k-1,2i}B_{-2i,-2\mu+1}u_{2\mu,t_{2m+1}}}\nonumber\\
&=&\sum_{i=1}^{n}(\Phi^{2}(2n+1))_{2k-1,2i-1}u_{2i,t_{2m+1}}+\sum_{\mu=1}^{n}{\sum_{i=\mu}^{n}(\Phi^{2}(2n+1))_{2k-1,2i}B_{-2i,-2\mu+1}u_{2\mu,t_{2m+1}}}\nonumber\\
&=&\sum_{\mu=1}^{n}[(\Phi^{2}(2n+1))_{2k-1,2\mu-1}+\sum_{i=\mu}^{n}(\Phi^{2}(2n+1))_{2k-1,2i}B_{-2i,-2\mu+1}]u_{2\mu,t_{2m+1}},k\leq
n.
\end{eqnarray}
It is used the formula (\ref{uformual}) for the third equality. If
denote
$\widehat{\Phi}(2n+1)_{k,\mu}=(\Phi^{2}(2n+1))_{2k-1,2\mu-1}+\sum_{i=\mu}^{n}(\Phi^{2}(2n+1))_{2k-1,2i}B_{-2i,-2\mu+1}$,
 then (\ref{phi}) become
\begin{eqnarray}\label{phibar1}
u_{2k,t_{2m+1+2(2n+1)}}=\sum_{\mu=1}^{n}\widehat{\Phi}(2n+1)_{k,\mu}u_{2\mu,t_{2m+1}}.
\end{eqnarray}
Further we denote
\begin{eqnarray}\label{recursionoperator22}
\widehat{\Phi}(2n+1)&=&(\widehat{\Phi}(2n+1)_{k,\mu})\nonumber\\
&=&((\Phi^{2}(2n+1))_{2k-1,2\mu-1}+\sum_{i=\mu}^{n}(\Phi^{2}(2n+1))_{2k-1,2i}B_{-2i,-2\mu+1}),
\end{eqnarray} and $\widehat{\Phi}(2n+1)$ is a $n\times n$ matrix
because $1\leq k\leq n$ and $1\leq \mu\leq n$. Then for
$\widehat{U}(2n)=({u_2},{u_4}, \cdots ,u_{2n})^t$, one has
\begin{eqnarray}\label{recursionformula}
\widehat{U}(2n)_{t_{2m+1+2(2n+1)}}=\widehat{\Phi}(2n+1)\widehat{U}(2n)_{t_{2m+1}}.
\end{eqnarray}

With the above prepared knowledge, we  have a theorem below.
\begin{theorem}\label{recursionth}
The flow equations of BKP(CKP) hierarchy under the
$(2n+1)$-reduction
 possess a recursion
operator $\widehat{\Phi}(2n+1)$ such that
\begin{equation}\label{recursion}
\widehat{U}(2n)_{t_{2m+1+2p(2n+1)}}=\widehat{\Phi}^p(2n+1)\widehat{U}(2n)_{t_{2m+1}},
\end{equation} where $\widehat{\Phi}(2n+1)$ is defined by
(\ref{recursionoperator22}).
\end{theorem}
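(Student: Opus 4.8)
The plan is to establish the recursion formula (\ref{recursion}) by iterating the single-step relation (\ref{recursionformula}) that has already been assembled in the paragraphs preceding the theorem. The starting point is the KP recursion (\ref{kprecursion}), which under the substitution $n\mapsto 2n+1$, $m\mapsto 2m+1$ and exponent $j\mapsto 2p$ yields
\begin{equation*}
U(2n+1)_{t_{2m+1+2p(2n+1)}}=\Phi^{2p}(2n+1)U(2n+1)_{t_{2m+1}}.
\end{equation*}
The first step is therefore to legitimize this substitution: one must check that $2p(2n+1)$ is precisely the shift in the time index produced by applying the KP recursion operator $2p$ times along the $(2n+1)$-reduction, and that the even-indexed components of $U(2n+1)$ are exactly the independent dynamical variables $\widehat U(2n)$ of the BKP(CKP) hierarchy — this is the content of the Remark after Proposition \ref{oddu(2l+1)}, together with the observation in Section \ref{sectionflow} that only $u_2,u_4,\dots,u_{2n}$ are independent under $(2n+1)$-reduction.

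The second step is the reduction-of-indices computation that was carried out in (\ref{phi}): extracting the $(2k-1)$-th row of $\Phi^{2p}(2n+1)U(2n+1)_{t_{2m+1}}$, splitting the sum over columns into odd and even columns, and using Proposition \ref{oddu(2l+1)} (equivalently (\ref{uformual})) to rewrite $u_{2i+1,t_{2m+1}}=\sum_\mu B_{-2i,-2\mu+1}u_{2\mu,t_{2m+1}}$. Here I must be slightly careful: in (\ref{phi}) the substitution $u_{2i+1,t_{2m+1}}=\sum_\mu B_{-2i,-2\mu+1}u_{2\mu,t_{2m+1}}$ is used, which is the $x$-differentiated form of (\ref{uformual}); since $B_{-2i,-2\mu+1}$ is a constant-coefficient differential operator in $\partial$ it commutes with $\partial_{t_{2m+1}}$, so differentiating (\ref{uformual}) in $t_{2m+1}$ is legitimate and gives exactly this identity. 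After the reindexing $\sum_{i=1}^n\sum_{\mu=1}^i=\sum_{\mu=1}^n\sum_{i=\mu}^n$, one collects the coefficient of $u_{2\mu,t_{2m+1}}$ and recognizes it, by definition (\ref{recursionoperator22}), as the $(k,\mu)$ entry of a matrix $\widehat\Psi_p$ built from $\Phi^{2p}(2n+1)$ in the same way $\widehat\Phi(2n+1)$ is built from $\Phi^{2}(2n+1)$. This yields
\begin{equation*}
\widehat U(2n)_{t_{2m+1+2p(2n+1)}}=\widehat\Psi_p\,\widehat U(2n)_{t_{2m+1}}.
\end{equation*}

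The third and decisive step is to show $\widehat\Psi_p=\widehat\Phi^p(2n+1)$, i.e. that the ``take the even sub-block and fold in the $B$-operators'' construction is multiplicative: applying it to $\Phi^{2p}$ gives the $p$-th power of what it gives when applied to $\Phi^2$. The cleanest way is not to manipulate matrix entries directly but to note that the passage $U(2n+1)\leftrightarrow\widehat U(2n)$ is implemented by a fixed rectangular ``inclusion/projection'' pair: let $\iota$ be the $(2n)\times n$ operator matrix with $\iota_{2\mu-1,\mu}=1$ and $\iota_{2i,\mu}=B_{-2i,-2\mu+1}$ (for $i\ge\mu$) and zero otherwise — this is exactly the matrix form of (\ref{uformual}) that sends $\widehat U(2n)$ to $(u_2,u_3,\dots,u_{2n+1})^t$ — and let $\pi$ be the $n\times(2n)$ projection onto even components. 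Then (\ref{phi}) says precisely $\widehat\Phi(2n+1)=\pi\,\Phi^2(2n+1)\,\iota$ acting on flows, and the KP recursion gives $\pi\,\Phi^{2p}(2n+1)\,\iota$ for $\widehat\Psi_p$. So the whole theorem reduces to the identity $\pi\,\Phi^{2p}\,\iota=(\pi\,\Phi^{2}\,\iota)^p$. To get this it suffices to prove $\iota\,\pi\,\Phi^{2}(2n+1) = \Phi^{2}(2n+1)$ when both sides act on a BKP(CKP) flow vector $U(2n+1)_{t_{2m+1}}$, i.e. that $\Phi^2(2n+1)$ preserves the ``constraint subspace'' cut out by (\ref{uformual}) (equivalently its $t$-derivative) — which is forced because $\Phi^{2}(2n+1)U(2n+1)_{t_{2m+1}}=U(2n+1)_{t_{2m+1+2(2n+1)}}$ is again a genuine flow of the $(2n+1)$-reduced hierarchy and hence again satisfies the constraint. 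Inserting $\iota\pi$ between consecutive factors of $\Phi^2$ then telescopes $\pi\Phi^{2p}\iota$ into $(\pi\Phi^2\iota)^p$, and an induction on $p$ finishes the proof.

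I expect the main obstacle to be this last step: making rigorous the claim that $\Phi^2(2n+1)$ leaves invariant the constraint locus, and hence that $\iota\pi$ may be inserted freely between powers of $\Phi^2$. One has to argue that the reduction constraints (\ref{bkpconstrain})–(\ref{ckpconstrain}), which come from the algebraic condition $L^*=-\partial^k L\partial^{-k}$ on the Lax operator, are compatible with the KP time evolution — i.e. that the BKP(CKP) locus inside the KP phase space is invariant under \emph{all} the odd KP flows $t_{2m+1}$ — so that every iterate $U(2n+1)_{t_{2m+1+2p(2n+1)}}$ again obeys (\ref{uformual}). This compatibility is standard (it is exactly why the BKP and CKP hierarchies are well defined), but one should state it explicitly as the hypothesis that the flows in (\ref{nflowequation}) are consistent with the constraint, and cite \cite{jimbo93}; granted that, the commuting-diagram/telescoping argument above is purely formal linear algebra with operator entries, and the induction on $p$ in (\ref{recursion}) is immediate.
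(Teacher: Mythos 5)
Your proposal is correct and in substance coincides with the paper's own proof: the paper simply applies the one-step relation (\ref{recursionformula}) $p$ times at the shifted times $t_{2(m+(p-1)(2n+1))+1}$, which is exactly the telescoping/induction your final step produces. Your $\pi$/$\iota$ formalism and your explicit remark that the constraint (\ref{uformual}) is preserved by all the odd flows merely spell out what the paper leaves implicit when it invokes (\ref{recursionformula}) for arbitrary $m$.
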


\begin{proof}
With  (\ref{recursionoperator22}) and (\ref{recursionformula}), it
is clear that we have
\begin{eqnarray*}
\widehat{U}(2n)_{t_{2m+1+2p(2n+1)}}&=&\widehat{U}(2n)_{t_{2(m+(p-1)(2n+1))+1+2(2n+1)}}\\
&=&\widehat{\Phi}(2n+1)\widehat{U}(2n)_{t_{2(m+(p-1)(2n+1))+1}}\\
&=&\widehat{\Phi}(2n+1)\widehat{U}(2n)_{t_{2m+1+2(p-1)(2n+1)}}\\
&&\cdots\\
&=&\widehat{\Phi}^p(2n+1)\widehat{U}(2n)_{t_{2m+1}}.
\end{eqnarray*}
\end{proof}

\textbf{Remark:} Under the $(2n+1)$-reduction,
$t_1,t_3,\cdots,t_{2n-1},t_{2n+3},t_{2n+5},\cdots,t_{4n+1} $-flows
are independent, and only n coordinates $(u_2,u_4,\cdots,u_{2n}) $
are independent. That is just the $2n$ flows  can generate the whole
BKP(CKP) hierarchy under the action of the recursion operator
$\widehat{\Phi}(2n+1)$ (\ref{recursionoperator22}). Though the
recursion operator $\widehat{\Phi}(2n+1)$ is nonlocal, but it
doesn't generate the nonlocal higher flow equations. Because the
flow equations (\ref{bckphierarchy}) are local, and the recursion
operator $\widehat{\Phi}(2n+1)$ is derived from these flow
equations. In particular, the difference  of the recursion operators
in eq.(\ref{recursion}) of the BKP hierarchy and the CKP hierarchy
is reflected by the appearance of the  operator $B$.

\section{Applications}
In this section, we will give some examples for the applications of
formula (\ref{recursion}). Here we only consider $3$-reduction of
the BKP and CKP hierarchies. For an example, we generate the $t_7$
flow equation from the $t_1$ flow equation for $3$-reduction.

For the BKP hierarchy, set $n=1,m=0$ and $p=1$ in (\ref{recursion}),
one can calculate
\begin{eqnarray}\label{phibar2}
\Phi(3)=\left(
\begin{array}{cc}
\Phi_{11}(3)&\Phi_{12}(3)\\
\Phi_{21}(3)&\Phi_{22}(3)
\end{array}
\right),
\end{eqnarray}
where
\begin{eqnarray*}
\begin{cases}
\Phi_{11}(3)=\frac{1}{3}\partial^3+\frac{1}{3}a_1\partial-\frac{1}{3}a_{1,x}-\frac{1}{3}a_{1,xx}\partial^{-1},\\
\Phi_{12}(3)=\frac{2}{3}\partial^2+\frac{2}{3}a_1+\frac{1}{3}a_{1,x}\partial^{-1},\\
\Phi_{21}(3)=-\frac{2}{9}\partial^4-\frac{4}{9}a_1\partial^2-\frac{2}{3}a_{1,x}\partial-\frac{2}{9}a_1^2+(\frac{1}{9}a_{1,xxx}-\frac{2}{9}a_1a_{1,x})\partial^{-1},\\
\Phi_{22}(3)=-\frac{1}{3}\partial^3-\frac{1}{3}a_1\partial-a_{1,x}-\frac{1}{3}a_{1,xx}\partial^{-1},
\end{cases}
\end{eqnarray*} and $a_1(3)=3u_2,a_0(3)=0$.
Because $B_{-2,-1}=-\partial$, then the recursion operator is
\begin{eqnarray}\label{hatphi}
\widehat{\Phi}(3)&=&\Phi_{11}^2(3)+\Phi_{12}(3)\Phi_{21}(3)-(\Phi_{11}(3)\Phi_{12}(3)+\Phi_{12}(3)\Phi_{22}(3))\partial\nonumber\\
&=&-\frac{1}{27}\partial^6-\frac{2}{3}u_2\partial^4-u_{2,x}\partial^3-(\frac{11}{9}u_{2,xx}+3u_2^2)\partial^2-(\frac{10}{9}u_{2,xxx}+7u_2u_{2,x})\partial\nonumber\\
&-&(\frac{5}{9}u_{2,xxxx}+2u_{2,x}^2+4u_2^3+\frac{16}{3}u_2u_{2,xx})-u_{2,x}\partial^{-1}(\frac{2}{3}u_{2,xx}+u_2^2)\nonumber\\
&-&(\frac{1}{9}u_{2,xxxxx}+\frac{5}{3}u_2u_{2,xxx}+\frac{5}{3}u_{2,x}u_{2,xx}+5u_2^2u_{2,x})\partial^{-1}.
\end{eqnarray} With the recursion operator
(\ref{recursionoperator22}), we can generate $t_7$ flow from $t_1$
flow
 by $u_{2,t_7}=\widehat{\Phi}(3)u_{2,t_1}=\widehat{\Phi}(3)u_{2,x}$, i. e.
 \begin{eqnarray}\label{bkpu7flow}
u_{2,t_7}=
&-&\frac{7}{9}u_2u_{2,xxxxx}-\frac{14}{9}u_{2,x}u_{2,xxxx}-\frac{7}{3}u_{2,xx}u_{2,xxx}-\frac{14}{3}u_{2,xxx}u_{2}^2\nonumber\\
&-&\frac{28}{3}u_2^3u_{2,x}-14u_{2,x}u_2u_{2,xx}-\frac{7}{3}u_{2,x}^3-\frac{1}{27}u_{2,xxxxxxx},
\end{eqnarray}  which consistent with the flow eq. (\ref{bkpt7flow}) of the BKP hierarchy under $3$-reduction.
With a scaling transformations for $u_2\rightarrow \frac{u}{3}$ and
$t_7\rightarrow -27t$, the operator (\ref{hatphi}) consistent with
the formula (B3) of Ref. \cite{sokolov} , and (\ref{bkpu7flow})
become the flow equation
\begin{eqnarray}
u_{t}&=&3u_{xxxxxxx}+15uu_{xxxxx}+189u_{xxx}^3+1134uu_xu_{xx}\nonumber\\
&+&126u_xu_{xxxx}
+756u^3u_x+189u_{xx}u_{xxx}+126u^2u_{xxx}\label{bkpu7flow2}.
\end{eqnarray}

Set $n=1,m=0$ and $p=1$ in eq.(\ref{recursion}) for the CKP
hierarchy, then
\begin{eqnarray}
\Phi(3)=\left(
\begin{array}{cc}
\Phi_{11}(3)&\Phi_{12}(3)\\
\Phi_{21}(3)&\Phi_{22}(3)
\end{array}
\right),
\end{eqnarray}
where
\begin{eqnarray*}
\begin{cases}
\Phi_{11}(3)=\frac{1}{3}\partial^3+\frac{1}{3}a_1\partial-\frac{1}{3}a_{1,x}+a_0(3)+(\frac{2}{3}a_{0,x}-\frac{1}{3}a_{1,xx})\partial^{-1},\\
\Phi_{12}(3)=\frac{2}{3}\partial^2+\frac{2}{3}a_1+\frac{1}{3}a_{1,x}\partial^{-1},\\
\Phi_{21}(3)=-\frac{2}{9}\partial^4-\frac{4}{9}a_1\partial^2-\frac{2}{3}a_{1,x}\partial-\frac{2}{9}a_1^2-\frac{2}{3}a_{0,x}+(\frac{1}{9}a_{1,xxx}-\frac{2}{9}a_1a_{1,x}-\frac{1}{3}a_{0,xx})\partial^{-1},\\
\Phi_{22}(3)=-\frac{1}{3}\partial^3-\frac{1}{3}a_1\partial-a_{1,x}+a_0+\frac{1}{3}(a_{0,x}-a_{1,xx})\partial^{-1},
\end{cases}
\end{eqnarray*} and $a_1(3)=3u_2$, $a_0(3)=\frac{3}{2}u_{2,x}$.
Because $B_{-2,-1}=-\frac{1}{2}\partial$, then the recursion
operator is
\begin{eqnarray}\label{hatphi2}
\widehat{\Phi}(3)&=&\widehat{\Phi}(3)_{1,1}\nonumber\\
&=&\Phi_{11}^2(3)+\Phi_{12}(3)\Phi_{21}(3)-\frac{1}{2}(\Phi_{11}(3)\Phi_{12}(3)+\Phi_{12}(3)\Phi_{22}(3))\partial\nonumber\\
&=&-\frac{1}{27}\partial^6-\frac{2}{3}u_2\partial^4-2u_{2,x}\partial^3-(\frac{49}{18}u_{2,xx}+3u_2^2)\partial^2-(\frac{35}{18}u_{2,xxx}+10u_2u_{2,x})\partial\nonumber\\
&-&(\frac{13}{18}u_{2,xxxx}+\frac{41}{6}u_2u_{2,xx}+\frac{23}{4}u_{2,x}^2+4u_2^3)-\frac{1}{6}u_{2,x}\partial^{-1}(u_{2,xx}+6u_2^2)\nonumber\\
&-&(\frac{1}{9}u_{2,xxxxx}+\frac{5}{3}u_2u_{2,xxx}+\frac{25}{6}u_{2,x}u_{2,xx}+5u_2^2u_{2,x})\partial^{-1}.
\end{eqnarray}
We can generate $t_7$ flow from $t_1$ flow by
$u_{2,t_7}=\widehat{\Phi}(3)u_{2,t_1}=\widehat{\Phi}(3)u_{2,x}$,
 i. e.
\begin{eqnarray}\label{ckpu7flow2}
u_{2,t_7}=
&-&\frac{35}{6}u_{2,x}^3-\frac{28}{3}u_2^3u_{2,x}-\frac{14}{3}u_2^2u_{2,xxx}-\frac{14}{3}u_{2,xx}u_{2,xxx}-21u_{2,x}u_2u_{2,xx}\nonumber\\
&-&\frac{49}{18}u_{2,x}u_{2,xxxx}-\frac{1}{27}u_{2,xxxxxxx}-\frac{7}{9}u_2u_{2,xxxxx},
\end{eqnarray} which consistent with  flow eq.(\ref{ckpt7flow}) for the  CKP hierarchy.  With a scaling transformations $u_2\rightarrow
\frac{3}{2}u$ and $t_7\rightarrow  -27t$, (\ref{hatphi2}) is nothing
but  the formula (30) of Ref. \cite{sokolov} and (\ref{ckpu7flow2})
become the flow of equation
\begin{eqnarray}
u_{t}&=&u_{xxxxxxx}+14uu_{xxxxx}+49uu_{xxxx}+84u_{xx}u_{xxx}+56u^2u_{xxx}\nonumber\\
&+&\frac{224}{3}u^3u_x+256uu_{x}u_{xx}+70u_{x}^3.\label{ckpu7flow3}
\end{eqnarray}

\textbf{Remark:} If let $n=1,m=1$ and $p=1$ in (\ref{recursion}), we
can also obtain the second recursion relation by
$\widehat{\Phi}(3)$, i. e. the formula
$\widehat{U}(2)_{t_{11}}=\widehat{\Phi}(3)\widehat{U}(2)_{t_5}$. It
is not difficult to generate $t_{11}$ flow equation of $u_2$ from
$t_5$ flow equation of  it. If we choose properly the number
 of coefficient $n,m,p$  in (\ref{recursion}), then
$\widehat{U}(2)_{t_{13}}=\widehat{\Phi}^2(3)\widehat{U}(2)_{t_1}$ is
obtained for  $n=1,m=0$ and $p=2$, and
$\widehat{U}(2)_{t_{17}}=\widehat{\Phi}^2(3)\widehat{U}(2)_{t_5}$ is
obtained  for $n=1$ and $m=2=p$. And the highest order of $\partial$
in
 $\widehat{\Phi}^2(3)$ is 12. Of course one can also use it to
generate the higher order flows.

\section{Conclusions and Discussions}

 In this paper, we found in Proposition \ref{oddu(2l+1)} that the odd dynamical variable $u_{2k+1}$ of
$\{u_j,j\geq1\}$ can be expressed by the even dynamical variable
$u_{2k}$ of $\{u_j,j\geq1\}$ in Lax operator $L$ by considering the
constraint of BKP(CKP) hierarchy. The flow equations and the
recursion operators of the BKP and the CKP hierarchies are given in
a unified approach, which also reflect  the two essential
differences between the two sub-hierarchies of the KP hierarchy
because of the appearance of the operator $B$.  Two examples of the
recursion operator are given explicitly for the BKP and CKP
hierarchy under the $3$-reduction. The $t_7$ flows are generated by
these recursion operators again, which are consistent with the flow
equations in the Lax equation.  So the validity of these recursion
operators is confirmed.

This research depicts deeply  the  the integrability   of KP
hierarchy and BKP(CKP) hierarchy. And it will also be helpful for
studying the difference of  Hamiltonian  structure, Poisson bracket
between the  KP hierarchy and BKP(CKP) hierarchy, which  will be
studied later.

{\bf Acknowledgments} {\noindent \small  This work is supported by
the NSF of China under Grant No.10971109 and Science Fund in Ningbo
University (No.xk1062, No.XYL11012). Jingsong He is also supported
by Program for NCET under Grant No.NCET-08-0515.

\end{document}